\newtheorem{theo}{Theorem}
\newtheorem{coro}{Corollary}
\def\dsum{\mathop{\displaystyle \sum }}%
\begin{document}

\date{\thistime,\,\today}
\title{Multiuser Precoding and Channel Estimation for Hybrid Millimeter Wave MIMO Systems}
\author{\thanks{ D. W. K. Ng is supported under Australian Research Council's Discovery Early Career Researcher Award funding scheme (project
number DE170100137). This work was supported in part by the Australian Research Council (ARC)
Linkage Project LP 160100708.}\IEEEauthorblockN{Lou Zhao, Derrick~Wing~Kwan~Ng, and Jinhong Yuan}
\IEEEauthorblockA{School of Electrical Engineering and
Telecommunications, The University of New South Wales, Sydney, Australia}
Email: lou.zhao@unsw.edu.au, w.k.ng@unsw.edu.au, j.yuan@unsw.edu.au}

\maketitle

\begin{abstract}

In this paper, we develop a low-complexity channel estimation for hybrid millimeter wave (mmWave) systems, where the number of radio frequency (RF) chains  is much less than the number of antennas equipped at each transceiver.
The proposed channel estimation algorithm aims to estimate the strongest angle-of-arrivals (AoAs) at both the base station (BS) and the users.
Then all the users transmit orthogonal pilot symbols to the BS via these estimated strongest AoAs to facilitate the channel estimation.
The algorithm does not require any explicit channel state information (CSI) feedback from the users and the associated signalling overhead of the algorithm is only proportional to the number of users, which is significantly less compared to various existing schemes.
Besides, the proposed algorithm is applicable to both non-sparse and sparse mmWave channel environments.
Based on the estimated CSI, zero-forcing (ZF) precoding is adopted for multiuser downlink transmission. In addition, we derive a tight achievable rate upper bound of the system.
Our analytical and simulation results show that the proposed scheme offer a considerable achievable rate gain compared to fully digital systems, where the number of RF chains equipped at each transceiver is equal to the number of antennas. Furthermore, the achievable rate performance gap between the considered hybrid mmWave systems and the fully digital system is characterized, which provides useful system design insights.

\end{abstract}

\renewcommand{\baselinestretch}{0.93}
\large\normalsize

\section{Introduction}

Higher data rates, large bandwidth, and higher spectral efficiency are necessary for the fifth-generation (5G) wireless communication systems to support various emerging applications \cite{Kwan_5G}. The combination of millimeter wave (mmWave) communication \cite{AZhang2015,Dai2016,Kokshoorn2016,JR:Kwan_massive_MIMO} with massive multiple-input multiple-output (MIMO) \cite{Yang2015,Bogale2015,Marzetta2010} is considered as one of the promising candidate technologies for 5G communication systems with many potential and exciting opportunities for research \cite{Swindlehurst2014,Bogale2015,Deng2015,Sohrabi2016,Rappaport2015,Bjornson2016}.
For example, the trade-offs between system performance, hardware complexity, and energy consumption \cite{AZhang2015,Heath2016a} are still unclear.
From the literature, it is certain that the conventional fully digital MIMO systems, in which each antenna connects with a dedicated radio frequency (RF) chain, are impractical for mmWave systems due to the prohibitively high cost, e.g. tremendous energy consumption of high resolution analog-to-digital convertors/digital-to-analog convertors (ADC/DACs) and power amplifiers (PAs).
Therefore, several mmWave hybrid systems were proposed as compromised solutions which strike a balance between hardware complexity and system performance \cite{Ni2016,Alkhateeb2015,Ayach2014,Han2015,Sohrabi2016}.
Specifically, the use of a large number of antennas, connected with only a small number of independent RF chains at transceivers, is adopted to exploit the large array gain to compensate the inherent high path loss in mmWave channels \cite{Rappaport2015,Hur2016}.  Yet, the hybrid system imposes a restriction on the number of RF chains which  introduces  a  paradigm  shift  in  the design of both resource allocation algorithms and transceiver
signal processing.

Conventionally, pilot-aided channel estimation algorithms are widely adopted for fully digital multiuser (MU) time-division duplex (TDD) massive MIMO systems \cite{Marzetta2010}  operating in sub-$6$ GHz frequency bands.
However, these algorithms cannot be  directly applied to hybrid mmWave systems as the number of RF chains is much small than the number of antennas.
In fact, for the channel estimation in hybrid mmWave systems, the strategies of allocating analog/digital beams to different users and estimating the equivalent baseband channels are still an open area of research \cite{Alkhateeb2015}.
Recently, several improved mmWave channel estimation algorithms were proposed \cite{Kokshoorn2016,Alkhateeb2015}.
The overlapped beam patterns and rate adaptation channel estimation were investigated in \cite{Kokshoorn2016} to reduce the required training time for channel estimation.
Then, the improved limited feedback hybrid channel estimation was proposed \cite{Alkhateeb2015} to maximize the received signal power at each single user so as to reduce the required training and feedback overheads. However, explicit channel state information (CSI) feedback from users is still required for these channel estimation algorithms. In practice, CSI feedbacks may cause system rate performance degradation due to the limited amount of the feedback and the limited  resolution of CSI quantization.
In addition, the CSI feedback also requires exceedingly high consumption of time resource.
Therefore, a low-complexity mmWave channel estimation algorithm, which does not require explicit CSI feedback, is necessary to unlock the potential of hybrid mmWave systems.

In the literature, most of the existing mmWave channel estimation algorithms leverage the sparsity of mmWave channels due to the extremely short wavelength of mmWave \cite{Kokshoorn2016,Alkhateeb2015}.
Generally, in suburban areas or outdoor long distance propagation environment \cite{Hur2016}, the sparsity of mmWave channels can be well exploited.
In practical urban areas (especially in the city center), the number of unexpected scattering clusters increases significantly and mmW communication channels may not be necessarily sparse.
For instance, in the field measurements in Daejeon city, Korea, and the associated ray-tracing simulation \cite{Hur2016}, the angle of arrivals (AoAs) at the BS and the users were observed under the impact of non-negligible scattering clusters.
In addition, existing mmWave channel estimation algorithms \cite{Kokshoorn2016,Alkhat2014,Alkhateeb2015}, which are designed based on the assumption of channel sparsity, may not be applicable to non-sparse mmWave channels.
Indeed, the scattering clusters of mmWave channels due to macro-objects or backscattering from the objects, have a significant impact on system performance and cannot be neglected in the system design.
Therefore, there is an emerging need for a channel estimation algorithm which is applicable to both non-sparse and sparse mmWave channels.

Motivated by aforementioned discussions, we consider a MU hybrid mmWave system.
In particular, we propose and detail a novel non-feedback non-iterative channel estimation algorithm which is applicable to both non-sparse and sparse mmWave channels.
Also, we analyze the achievable rate performance of the mmWave system using ZF precoding based on the estimated equivalent channel information.

\begin{figure}[t]
\centering \includegraphics[width=3.4in]{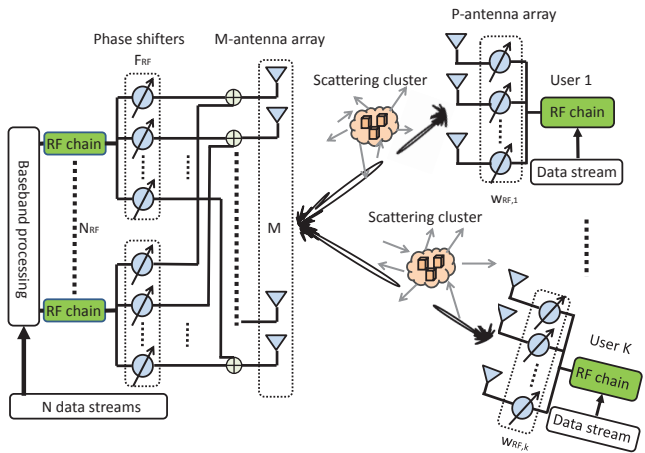}
\vspace{-1mm}
\caption{A mmWave communication system with a hybrid system of
transceivers.}
\label{fig:hybrid}\vspace{-5mm}
\end{figure}

Our main contributions are summarized as follows:

\begin{itemize}
%

\item We propose a three-step MU channel estimation scheme for mmWave channels.
In the first two steps, we estimate the strongest AoAs at both the BS and the users instead of estimating the combination of multiple AoAs.
The estimated strongest AoAs will be exploited for the design of BS and users beamforming matrices.
In the third step, all the users transmit orthogonal pilot symbols to the BS via the beamforming matrices.
We note that the proposed channel estimation scheme does not require explicit CSI feedbacks from the users as well as iterative measurements.
The required training overheads of our proposed algorithm only scale with the number of users. Besides, the proposed algorithm is very general, which is not only applicable to sparse mmWave channels, but also suitable for non-sparse channels.

\item We analyze the achievable rate performance of the proposed ZF precoding scheme based on the estimated CSI of an equivalent channel.
While assuming the equivalent CSI is perfectly known at the BS, we derive a tight performance upper bound of our proposed scheme.
Also, we quantify the performance gap between the proposed hybrid scheme and the fully digital system in terms of achievable rate per user.
It is interesting to note this the performance gap is determined by the strongest AoA component to the scattering component ratio.

\end{itemize}

Notation: $\mathrm{tr}(\cdot )$ denotes trace operation; $\|\cdot\|_{\mathrm{F}}$ denotes the Frobenius norm of matrix; $\lambda _{i}(\cdot )$ denotes the $i$-th maximum eigenvalue of a matrix; $\mathrm{diag}\left\{a\right\} $ is a diagonal matrix with the entries $a$ on its diagonal; $[\cdot ]^{\ast }$ and $[\cdot]^{T}$ denote the complex conjugate and transpose operations, respectively.
The distribution of a circularly symmetric complex Gaussian (CSCG) random vector with a mean vector $\mathbf{x}$ and a covariance matrix ${\sigma}^{2}\mathbf{I}$  is denoted by ${\cal CN}(\mathbf{x},{\sigma}^{2}\mathbf{I})$, and $\sim$ means ``distributed
as". $\mathbb{C}^{N\times M}$ denotes the space of $N\times M$ matrices with complex  entries.

\vspace*{-0mm}

\section{System Model}

We consider a MU hybrid mmWave system which consists of one base station (BS) and $N$ users in a single cell, as shown in Figure \ref{fig:hybrid}.
The BS is equipped with $M\geq 1$ antennas and $N_{\mathrm{RF}}$  radio frequency (RF) chains to serve the $N$ users.
We assume that each user is equipped with $P$ antennas connected to a single RF chain such that $M\geqslant N_{\mathrm{RF}}\geqslant N$.
In the following sections, we set $N = N_{\mathrm{RF}}$ to simplify the analysis.
Each RF chain at the BS can access to all the antennas by using $M$ phase shifters, as shown in Figure \ref{fig:RF_chain}.
\begin{figure}[t]
\centering \includegraphics[width=3.4in]{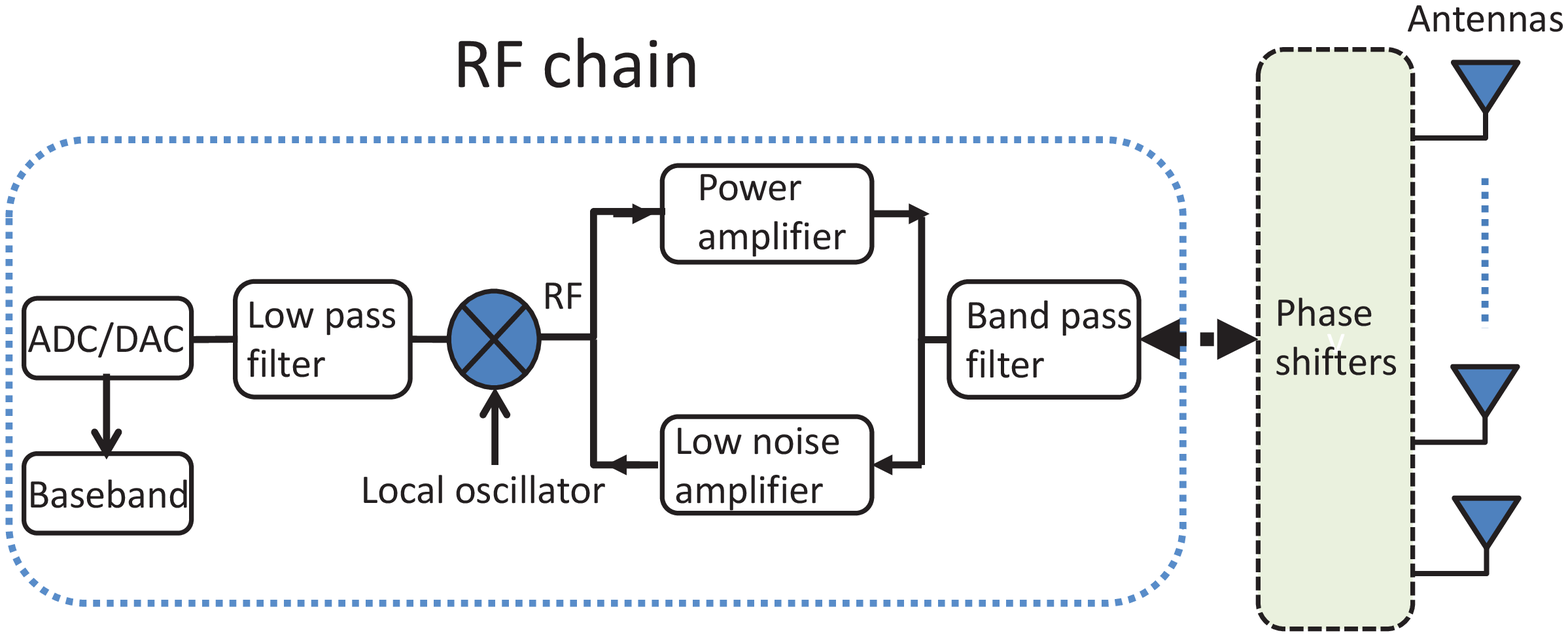}
\vspace{-1mm}
\caption{A block diagram of a RF chain structure.}
\label{fig:RF_chain}\vspace{-5mm}
\end{figure}
At each BS, the number of phase shifters is $M\times N_{\mathrm{RF}}$.
Due to significant propagation attenuation at mmWave frequency, the system is dedicated to cover a small area, e.g. cell radius is $\sim 150$ m.
We assume that the users and the BS are fully synchronized and time division duplex (TDD) is adopted to facilitate uplink and downlink communications \cite{Marzetta2010}.
In previous work \cite{Alkhateeb2015}, mmWave channels were assumed to have sparse propagation paths between the BS and the users.
However, in recent field tests, both a strong line-of-sight (LOS) component and non-negligible scattering component may exist in mmWave propagation channels \cite{Buzzi2016,Rappaport2015,Hur2016}, especially in the urban area.
In particular, mmWave channels can also be modeled by non-sparse Rician fading and with a large Rician K-factor\footnote{We note that existing models \cite{Alkhateeb2015} for sparse mmWave channels are special cases of the considered model.} (approximately $\ge5$ dB) \cite{Hur2016,Rappaport2015,Al-Daher2012}.

Let $\mathbf{H}_{k}\in\mathbb{C}^{M\times P}$ be the uplink channel matrix between the $k$-th user and the BS in the cell.
We assume that $\mathbf{H}_{k}$ is a slow time-varying block Rician fading channel, i.e., the channel is constant in a block but varies slowly from one block to another.
Then, in this paper, we assume that the channel matrix $\mathbf{H}_k$ can be decomposed into a deterministic LOS channel matrix $\mathbf{H}_{\mathrm{L},k}\in\mathbb{C}^{M\times P}$ and a scattered channel matrix $\mathbf{H}_{\mathrm{S,}k}\in\mathbb{C}^{M\times P}$ \cite{Buzzi2016,Dai2016}, i.e., \vspace*{-0mm}
\begin{equation}
\mathbf{H}_{k}=\underset{\mathrm{LOS}\text{\ }\mathrm{component}}{\underbrace{\mathbf{H}_{\mathrm{L,}k}\mathbf{G}_{\mathrm{L,}k}}}+\underset{\mathrm{Scattering}\text{\ }\mathrm{component}}{\underbrace{\mathbf{H}_{\mathrm{S,}k}\mathbf{G}_{\mathrm{S,}k}}},
\end{equation}\vspace*{-0mm}%
where $\mathbf{G}_{\mathrm{L,}k}\in\mathbb{C}^{P\times P}$ and $\mathbf{G}_{\mathrm{S},k}\in\mathbb{C}^{P\times P}$ are diagonal matrices with entries $\mathbf{G}_{\mathrm{L,}k}=\mathrm{diag}\left\{ \sqrt{\frac{\upsilon _{k}}{\upsilon _{k}+1}}\right\}$ and $\mathbf{G}_{\mathrm{S,}k}=\mathrm{diag}\left\{ \sqrt{\frac{1}{\upsilon _{k}+1}}\right\}$, respectively, and $\upsilon _{k}>0$ is the Rician K-factor of user $k$.
In general, we can adopt different array structures, e.g. uniform linear array (ULA) and uniform panel array (UPA) for both the BS and the users.
Here, we adopt the ULA for it is commonly implemented in practice \cite{Alkhateeb2015}.
We assume that all the users are separated by hundreds of wavelengths or more \cite{Marzetta2010}.
Thus, we can express the deterministic LOS channel matrix $\mathbf{H}_{\mathrm{L},k}$ of the $k$-th user as \cite{book:wireless_comm}\vspace*{-0mm}
\begin{equation}
\mathbf{H}_{\mathrm{L,}k}=\mathbf{h}_{\mathrm{L,}k}^{\mathrm{BS}}\mathbf{h}_{%
\mathrm{L,}k}^{H},
\end{equation}\vspace*{-0mm}%
where $\mathbf{h}_{\mathrm{L},k}^{\mathrm{BS}}$ $\in\mathbb{C}^{M\times 1}$ and $\mathbf{h}_{\mathrm{L,}k}$ $\in\mathbb{C}^{P\times 1}$ are the antenna array response vectors of the BS and the $k$-th user respectively.
In particular, $\mathbf{h}_{\mathrm{L,}k}^{\mathrm{BS}}$ and $\mathbf{h}_{\mathrm{L,}k}$ can be expressed as \cite{book:wireless_comm,Trees2002}\vspace*{-0mm}
\begin{align}
\mathbf{h}_{\mathrm{L,}k}^{\mathrm{BS}}=\left[
\begin{array}{ccc}
1, & \ldots
, & \text{ }e^{-j2\pi \left( M-1\right) \tfrac{d}{\lambda }\cos \left(
\theta _{k}\right) }%
\end{array}%
\right] ^{T} \\
\text{and} \text{\ }\mathbf{h}_{\mathrm{L,}k}=\left[\begin{array}{ccc}
1, & \ldots ,
& \text{ }e^{-j2\pi \left( M-1\right) \tfrac{d}{\lambda }\cos \left( \phi
_{k}\right) }%
\end{array}%
\right]^{T},
\end{align}\vspace*{-0mm}%
respectively, where $d$ is the distance between the neighboring antennas and $\lambda $ is the wavelength of the carrier frequency.
Variables $\theta _{k}\in \left[ 0,+\pi \right]$ and $\phi _{k}\in \left[ 0,+\pi \right] $ are the angles of incidence of the LOS path at antenna arrays of the BS and user $k$, respectively.
For convenience, we set $d=\frac{\lambda }{2}$ for the rest of the paper which is an assumption commonly adopted in the literature \cite{Trees2002,book:wireless_comm}.

Without loss of generality, we assume that the scattering component $\mathbf{H}_{\mathrm{S,}k}$ consists $N_{\mathrm{cl}}$ clusters and each cluster contributes $N_{\mathrm{l}}$ propagation path \cite{Buzzi2016}, which can be expressed as
\begin{eqnarray}
\mathbf{H}_{\mathrm{S,}k}&=&\sqrt{\tfrac{1}{{\sum }_{i=1}^{N_{\mathrm{cl}}}{N_{\mathrm{l},i}}}}\overset{N_{\mathrm{cl}}}{\underset{i=1}{\dsum }}\overset{N_{\mathrm{l},i}}{\underset{l=1}{\dsum }}{\alpha _{i,l}}\mathbf{h}_{i,l}^{\mathrm{BS}}\mathbf{h}_{k,i,l}^{H}\notag\\
&=& \left[ \begin{array}{ccc} \mathbf{h}_{\mathrm{S},1},\ldots,\mathbf{h}_{\mathrm{S},P} \end{array}\right],
\end{eqnarray}
where $\mathbf{h}_{i,l}^{\mathrm{BS}}\in\mathbb{C}^{M\times 1}$ and $\mathbf{h}_{k,i,l}\in\mathbb{C}^{P\times 1}$ are the antenna array response vectors of the BS and the $k$-th user associated to the $\left( i,l\right) $-th propagation path, respectively.
Here, $\alpha _{i,l}\sim \mathcal{CN}\left( 0,1\right) $ represents the path attenuation of the $\left(i,l\right)$-th propagation path and $\mathbf{h}_{\mathrm{S},k}\in\mathbb{C}^{M\times 1}$ is the $k$-th column vector of $\mathbf{H}_{\mathrm{S},k}$.
With the increasing number of clusters, the path attenuation coefficients and the AoAs between the users and the BS become randomly distributed \cite{Buzzi2016,Hur2016}.
Therefore, we model the entries of scattering component $\mathbf{H}_{\mathrm{S,}k}$ in a general manner as an independent and identically distributed (i.i.d.) random variable\footnote{To facilitate the study of the downlink hybrid precoding, we assume perfect long-term power control is performed to compensate for path loss and shadowing at the desired users and equal power allocation among different data streams of the users\cite{Alkhateeb2015,Ni2016,Yang2013}. Thus, the entries of scattering component $\mathbf{H}_{\mathrm{S,}k}$ are modeled by i.i.d. random variables. } $\mathcal{CN}\left( 0,1\right)$.
\vspace*{+0.0mm}
\section{Proposed Hybrid Channel Estimation}
In practice, the hybrid system imposes a fundamental challenge for mmWave channel estimation.
Unfortunately, the conventional pilot-aided channel estimation for fully digital systems, e.g. \cite{Kokshoorn2016,Alkhateeb2015}, is not applicable to the considered hybrid mmWave system.
The reasons are that the number of RF chains is much smaller than the number of antennas equipped at the BS and the transceiver beamforming matrix cannot be acquired.

To address this important issue, we propose a new pilot-aided hybrid channel estimation, which mainly contains three steps as shown in Figure \ref{fig:CEI}.
In the first and second steps, we introduce unique unmodulated frequency tones for strongest AoAs estimation, which is inspired by signal processing in monopulse radar and sonar systems \cite{book:wireless_comm}.
The estimated strongest AoAs at both the BS and the users sides will be used for the design of BS and users beamforming matrices.
In the third step, orthogonal pilot sequences are transmitted from all the users to the BS to estimate the uplink channels, which will be adpoted for the design of the BS digital baseband downlink precoder by exploiting the reciprocity between the uplink and downlink channels.

\begin{figure}[t]
\centering
\includegraphics[width =
3.2in]{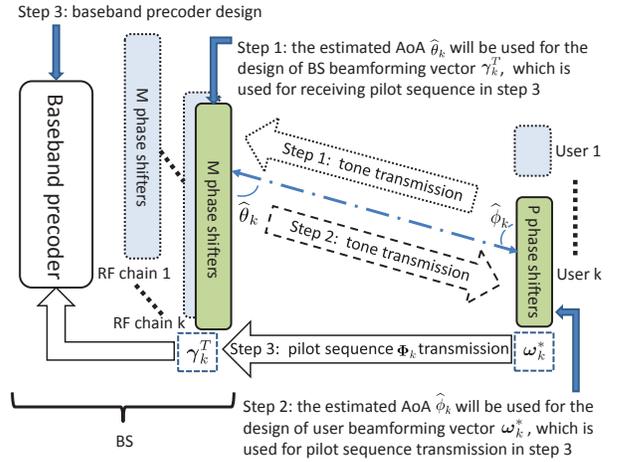}\vspace{-0mm}
\caption{An illustration of the proposed channel estimation algorithm for hybrid mmWave systems.}
\label{fig:CEI}\vspace*{-5mm}
\end{figure}

\subsubsection{Step 1}
First, all the users transmit unique frequency tones to the desired BS in the uplink simultaneously.
For the $k$-th user, an unique unmodulated frequency tone, $x_{k}=\cos \left( 2\pi f_{k}t\right), k\in \{1,\cdots ,N\}$, is transmitted from one of the omni-directional antennas in the antenna array to the BS.
Here, $f_{k}$ is the single carrier frequency and $t$ stands for time and $f_{k}\neq f_{j},  \forall k\neq j$.
For the AoA estimation, if the condition $\frac{f_{k}-f_{j}}{f_{c}} < 10^{-4}, \text{ }\forall k\neq j$ is satisfied, the AoA estimation error by using ULA with a single tone is generally negligible \cite{Trees2002}, where $f_{c}$ is the system carrier frequency.
The pass-band received signal of user $k$ at the BS, $\mathbf{y}_{k}^{\mathrm{BS}}$, is given by\vspace*{-0mm}
\begin{equation}
\mathbf{y}_{k}^{\mathrm{BS}}=\left( \sqrt{\dfrac{\upsilon _{k}}{%
\upsilon _{k}+1}}\mathbf{h}_{\mathrm{L},k}^{\mathrm{BS}}+\sqrt{\dfrac{1}{%
\upsilon _{k}+1}}\mathbf{h}_{\mathrm{S},k}\right) x_{k}+\mathbf{z}_{\mathrm{%
BS}}\mathbf{,}
\end{equation}\vspace*{-0mm}%
where $\mathbf{z}_{\mathrm{BS}}$ denotes the thermal noise at the antenna array of the BS, $\mathbf{z}_{\mathrm{BS}}\sim\mathcal{CN}\left( \mathbf{0},{\sigma_{\mathrm{BS}}^{2}}\mathbf{I} \right)$, and ${\sigma} _{\mathrm{BS}}^{2}$ is the noise variance at each antenna of the BS.

To facilitate the estimation of AoA, we perform a linear angular domain search in $\left[0,180\right]$ with an angle search step size of $\frac{180}{J}$.
Therefore, the AoA detection matrix $\mathbf{\Gamma }_{k}\in\mathbb{C}^{M\times J}$, $\mathbf{\Gamma }_{k}= \left[ \begin{array}{ccc} \bm{\gamma }_{k,1},\ldots,\bm{\gamma }_{k,J} \end{array}\right]$ contains $J$ column vectors. The $i$-th vector $\bm{\gamma }_{k,i}\in\mathbb{C}^{M\times 1}, i\in \{1,\cdots,J\}$, stands for a potential AoA of user $k$ at the BS and is given by\vspace*{-0mm}%
\begin{equation}
\mathbf{\gamma }_{k,i}=
\frac{1}{\sqrt{M}}\left[
\begin{array}{ccc}
1, & \ldots , & \text{ }e^{j2\pi \left( M-1\right) \tfrac{d}{\lambda }\cos
\left( \widehat{\theta }_{i}\right) }
\end{array}%
\right] ^{T},
\end{equation}\vspace*{-0.0mm}%
where $\widehat{\theta }_{i}= \left(i-1\right)\frac{180}{J}, i\in \{1,\cdots,J\}$, is the assumed AoA and $\bm{\gamma }_{k,i}^{H}\bm{\gamma }_{k,i}=1$.
For the AoA estimation of user $k$, $\mathbf{\Gamma }_{k}$ is implemented in the $M$ phase shifters connected by the $k$-th RF chain.
The local oscillator (LO) of the $k$-th RF chain at the BS generates the same carrier frequency $f_{k}$ to down convert the received signals to the baseband, as shown in Figure \ref{fig:RF_chain}.
After the down-conversion, the signals will be filtered by a low pass filter which can remove other frequency tones.
The equivalent received signal at the BS from user $k$ at the $i$-th potential AoA is given by\vspace*{-0mm}
\begin{align}
r_{k,i}^{\mathrm{BS}}=& \sqrt{\dfrac{\upsilon _{k}}{\upsilon
_{k}+1}}\bm{\gamma }_{k,i}^{T}\mathbf{h}_{\mathrm{L},k}^{\mathrm{BS}}\notag \\
&+\sqrt{\dfrac{1}{\upsilon _{k}+1}}\bm{\gamma }_{k,i}^{T}\mathbf{h}_{\mathrm{S},k} +\bm{\gamma }_{k,i}^{T}\mathbf{z}_{\mathrm{BS}
}.  \label{AoA_D}
\end{align}\vspace*{-0mm}%
The potential AoA, which leads to the maximum value among the $J$ observation directions, i.e.,\vspace*{-0mm}%
\begin{equation}
\widetilde{\mathbf{\gamma}}_{k}=\underset{\forall \gamma_{k,i},\text{\ } i\in \{1,\cdots,J\}}{\arg \max \left\vert r_{k,i}^{\mathrm{BS}}\right\vert },  \label{RFBF}
\end{equation}\vspace*{-0mm}%
is considered as the AoA of user $k$.
Besides, vector $\widetilde{\mathbf{\gamma}}_{k}$ corresponding to the AoA with maximal value in (\ref{RFBF}) will be exploited as the $k$-th user's beamforming vector at the BS.
As a result, we can also estimate all other users' uplink AoAs at the BS from their corresponding transmitted signals simultaneously.
For notational simplicity, we denote $\mathbf{F}_{\mathrm{RF}}=\left[\begin{array}{ccc}\widetilde{\mathbf{\gamma}}_{1},\ldots, \widetilde{\mathbf{\gamma}}_{N}\end{array}\right] \in\mathbb{C}^{M\times N}$ as the BS beamforming matrix.
\vspace*{+0mm}%
\subsubsection{Step 2}

The BS sends orthogonal frequency tones to all the users exploiting beamforming matrix $\mathbf{F}_{\mathrm{RF}}$ obtained in step $1$.
This facilitates the downlink AoAs estimation at the users and this information will be used to design the beamforming vectors to be adopted at the users.

The received signal $\mathbf{y}_{k}^{\mathrm{UE}}$ at user $k$ can be expressed as\vspace*{-0mm}
\begin{equation}
\mathbf{y}_{k}^{\mathrm{UE}}=\left[ \mathbf{G}_{\mathrm{L,}k}\mathbf{h}_{\mathrm{L,}k}^{\ast }\left(\mathbf{h}_{\mathrm{L},k}^{\mathrm{BS}}\right)^{T}+\mathbf{G}_{\mathrm{S,}k}\mathbf{H}_{\mathrm{S,}k}^{T}\right]\widetilde{\mathbf{\gamma }}_{k}x_{k} +\mathbf{z}_{\mathrm{MS}},
\end{equation}\vspace*{-0mm}%
where $\mathbf{z}_{\mathrm{MS}}$ denotes the thermal noise at the antenna array of the users, $\mathbf{z}_{\mathrm{MS}}\sim \mathcal{CN}\left( \mathbf{0},{\sigma_{\mathrm{MS}}^{2}}\mathbf{I}\right),$ and ${\sigma} _{\mathrm{MS}}^{2}$ is the noise variance for all the users.

The AoA detection matrix for user $k$, $\mathbf{\Omega }_{k}\in\mathbb{C}^{P\times J}$, which also contains $J$ estimation column vectors, is implemented at phase shifters of user $k$.
The $i$-th column vector of matrix $\mathbf{\Omega }_{k}$ for user $k$, $\mathbf{\omega }_{k,i}\in\mathbb{C}^{P\times 1}, i\in \{1,\cdots,J\}$, is given by\vspace*{-0mm}
\begin{equation}
\mathbf{\omega }_{k,i}=\frac{1}{\sqrt{P}}\left[
\begin{array}{ccc}
1, & \ldots , & e^{j2\pi \left( P-1\right) \tfrac{d}{\lambda }\cos \left(\widehat{\phi }_{i}\right) }%
\end{array}%
\right] ^{T},
\end{equation}\vspace*{-0mm}%
where $\widehat{\phi }_{i}= \left(i-1\right)\frac{180}{J}, i\in \{1,\cdots,J\}$, is the $i$-th potential AoA of user $k$ and $\mathbf{\omega }_{k,i}^{H}\mathbf{\omega }_{k,i}=1$.
With similar  procedures as shown in step 1, the equivalent received signal from the BS at user $k$ of the $i$-th potential AoA is given by \vspace*{-0.0mm}
\begin{align}
r _{k,i}^{\mathrm{UE}}= \text{\ }&\mathbf{\omega }_{k,i}^{H}\sqrt{\dfrac{\upsilon _{k}}{\upsilon
_{k}+1}}\mathbf{h}_{\mathrm{L,}k}^{\ast }\left(\mathbf{h}_{\mathrm{L},k}^{\mathrm{BS}}\right)^{T}\widetilde{\mathbf{\gamma }}_{k}\label{AoA_D2} \\
&+\mathbf{\omega }_{k,i}^{H}\sqrt{\dfrac{1}{\upsilon _{k}+1}}\mathbf{H}_{\mathrm{S,}k}^{T}\widetilde{\mathbf{\gamma }}_{k}+\mathbf{\omega }_{k,i}^{H}\mathbf{z}_{\mathrm{MS}}. \notag
\end{align}\vspace*{-0.0mm}%
Similarly, we search for the maximum value among $J$ observation directions and design the beamforming vector based on the estimated AoA of user $k$.
The beamforming vector for user $k$ is given by\vspace*{-0.0mm}
\begin{equation}
\widetilde{\mathbf{\omega }}_{k}^{\ast}=\underset{\forall \mathbf{\omega }_{k,i},\text{\ }  i\in \{1,\cdots,J\}}{\arg \max \left\vert r
_{k,i}^{\mathrm{UE}}\right\vert }
\end{equation}\vspace*{-0.0mm}and we denote the matrix $\mathbf{Q}_{\mathrm{RF}}=\left[
\begin{array}{ccc}
\widetilde{\mathbf{\omega }}_{1}^{\ast}, \ldots, \widetilde{\mathbf{\omega }}_{N}^{\ast }\end{array}\right] \in\mathbb{C}^{P\times N}$ as the users beamforming matrix.
\vspace*{+0mm}
\subsubsection{Step 3}
The BS and users beamforming matrices based on estimated uplink AoAs and downlink AoAs are designed via step 1 and step 2, respectively.
After that, all the users transmit orthogonal pilot sequences to the BS via user beamforming vectors $\widetilde{\mathbf{\omega }}_{k}^{\ast}$.

We denote the pilot sequences of the $k$-th user in the cell as $\mathbf{\Phi }_{k}=\left[ \vartheta _{k}\left( 1\right)
,\vartheta _{k}\left( 2\right) ,....\vartheta _{k}\left( N\right) \right]^{T}$, $\mathbf{\Phi }_{k}\in\mathbb{C}^{N\times 1}$, stands for $N$ symbols transmitted across time.
The pilot symbols used for the equivalent channel\footnote{The equivalent channel composes of the BS beamforming matrix, the mmWave channel, and the users beamforming matrix.} estimation are transmitted in sequence from symbol $\vartheta _{k}\left( 1\right)$ to symbol $\vartheta _{k}\left(
N\right)$.
The pilot symbols for all the $N$ users form a matrix, $\mathbf{\Psi \in\mathbb{C}}^{N\times N}\mathbf{,}$ where $\mathbf{\Phi }_{k}$ is a column vector of
matrix $\mathbf{\Psi }$ given by\vspace*{-0mm}
\begin{equation}
\mathbf{\Psi }=\ \sqrt{E_{\mathrm{P}}}\left[
\begin{array}{cccc}
\mathbf{\Phi }_{1} & \mathbf{\Phi }_{2} & .... & \mathbf{\Phi }_{N}%
\end{array}%
\right],
\end{equation}\vspace*{-0mm}%
where $\mathbf{\Phi }_{i}^{H}\mathbf{\Phi }_{j}=0,\text{ }\forall
i\neq j,\text{ }i,\text{ }j\in \left\{ 1,\cdots N\right\}$, and $E_{\mathrm{P}}$ represents the transmitted pilot symbol energy.
Note that $\mathbf{\Psi }^{H}\mathbf{\Psi }=E_{\mathrm{P}}\mathbf{I}_{N}$.
Meanwhile, the BS beamforming matrix $\mathbf{F}_{\mathrm{RF}}$ is utilized to receive pilot sequences at all the RF chains.
As the length of the pilot sequences is equal to the number of users, we obtain an $N \times N$ observation matrix from all  the RF chains at the BS.
In particular, the received signal at the $k$-th RF chain at the BS is $\mathbf{s}_{k}^{T}\in\mathbb{C}^{1\times N}$, which is given by\vspace*{-0mm}%
\begin{equation}
\mathbf{s}_{k}^{T}  =\widetilde{\mathbf{\gamma}}_{k}^{T}\overset{N}{\underset{i=1}%
{\sum }}\mathbf{H}_{i}\widetilde{\mathbf{\omega }}_{i}^{\ast }\sqrt{E_{%
\mathrm{P}}}\mathbf{\Phi }_{i}^{T}+\widetilde{\mathbf{\gamma}}_{k}^{T}\mathbf{Z},\label{RSIRFC}
\end{equation}\vspace*{-0mm}%
where $\mathbf{Z}\in\mathbb{C}^{M\times N}$ denotes the additive white Gaussian noise matrix at the BS and the entries of $\mathbf{Z}$ are modeled by i.i.d. random variable with distribution $\mathcal{CN}\left( 0,\sigma _{\mathrm{BS}}^{2}\right)$.

\setcounter{equation}{19}
\begin{figure*}[t]
\begin{theo}\label{thm:Theo_1}
 The achievable rate per user of the ZF precoding is bounded by \vspace*{-0mm}%
\begin{align}
R_{\mathrm{HB}} \leqslant R_{\mathrm{HB}}^{\mathrm{upper}}=\log _{2}\left[ 1+\frac{1}{N^{2}}\left[ \left( \dfrac{\upsilon }{\upsilon+1}\right)MP  \| \mathbf{F}_{\mathrm{RF}}^{H}\mathbf{F}_{\mathrm{RF}} \|_{\mathrm{F}}^{2} +\left( \dfrac{1}{\upsilon +1}\right) N^{2}\right] \dfrac{E_{s}}{\sigma _{\mathrm{MS}}^{2}}\right].  \label{Theo_1}
\end{align}
\end{theo}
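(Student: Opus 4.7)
The plan is to first reduce the hybrid multiuser downlink to an equivalent $N\times N$ baseband channel $\mathbf{H}_{\mathrm{eq}}$ whose $(k,i)$ entry is $\widetilde{\mathbf{\gamma}}_k^T\mathbf{H}_i\widetilde{\mathbf{\omega}}_i^{\ast}$, and then apply ZF precoding $\mathbf{W}=\alpha\mathbf{H}_{\mathrm{eq}}^H(\mathbf{H}_{\mathrm{eq}}\mathbf{H}_{\mathrm{eq}}^H)^{-1}$ with $\alpha$ chosen so that the total transmit power satisfies $\mathrm{tr}(\mathbf{W}\mathbf{W}^H)=E_s$. Since $\mathbf{H}_{\mathrm{eq}}\mathbf{W}=\alpha\mathbf{I}$, every user sees the same per-user SINR, namely $\mathrm{SINR}=E_s/[\sigma_{\mathrm{MS}}^2\,\mathrm{tr}((\mathbf{H}_{\mathrm{eq}}\mathbf{H}_{\mathrm{eq}}^H)^{-1})]$, and the common per-user rate is $R_{\mathrm{HB}}=\log_2(1+\mathrm{SINR})$.

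Next, I would upper-bound this rate in two moves. First, invoke Jensen's inequality on the concave function $\log_2(1+\cdot)$ to push the expectation over the random scattering component inside the logarithm. Second, apply the eigenvalue AM--HM inequality $\mathrm{tr}(A)\,\mathrm{tr}(A^{-1})\geq N^2$ to the positive-definite matrix $A=\mathbf{H}_{\mathrm{eq}}\mathbf{H}_{\mathrm{eq}}^H$, which yields $1/\mathrm{tr}(A^{-1})\leq\mathrm{tr}(A)/N^2$. Combining the two steps gives $R_{\mathrm{HB}}\leq\log_2\!\left(1+\frac{E_s}{N^2\sigma_{\mathrm{MS}}^2}\,\mathbb{E}\bigl[\|\mathbf{H}_{\mathrm{eq}}\|_{\mathrm{F}}^2\bigr]\right)$, reducing the task to computing $\mathbb{E}[\|\mathbf{H}_{\mathrm{eq}}\|_{\mathrm{F}}^2]$.

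Finally, I would split each entry $[\mathbf{H}_{\mathrm{eq}}]_{k,i}$ into a deterministic LOS term $c^{\mathrm{L}}_{k,i}=\widetilde{\mathbf{\gamma}}_k^T\mathbf{h}^{\mathrm{BS}}_{\mathrm{L},i}\mathbf{h}_{\mathrm{L},i}^H\widetilde{\mathbf{\omega}}_i^{\ast}$ weighted by $\sqrt{\upsilon/(\upsilon+1)}$ and a zero-mean scattering term $c^{\mathrm{S}}_{k,i}$ weighted by $\sqrt{1/(\upsilon+1)}$, so the cross-terms vanish under expectation. For the LOS contribution, matched user beamforming gives $\mathbf{h}_{\mathrm{L},i}^H\widetilde{\mathbf{\omega}}_i^{\ast}=\sqrt{P}$, whence $|c^{\mathrm{L}}_{k,i}|^2=P|\widetilde{\mathbf{\gamma}}_k^T\mathbf{h}^{\mathrm{BS}}_{\mathrm{L},i}|^2$; a direct computation with the ULA exponentials shows $|\widetilde{\mathbf{\gamma}}_k^T\mathbf{h}^{\mathrm{BS}}_{\mathrm{L},i}|^2=M|\widetilde{\mathbf{\gamma}}_k^H\widetilde{\mathbf{\gamma}}_i|^2$, and summing over $k,i$ produces $MP\|\mathbf{F}_{\mathrm{RF}}^H\mathbf{F}_{\mathrm{RF}}\|_{\mathrm{F}}^2$. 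For the scattering contribution, the unit-norm columns of $\mathbf{F}_{\mathrm{RF}}$ and $\mathbf{Q}_{\mathrm{RF}}$ together with the i.i.d. $\mathcal{CN}(0,1)$ entries of $\mathbf{H}_{\mathrm{S},i}$ yield $\mathbb{E}[|c^{\mathrm{S}}_{k,i}|^2]=1$ for each $(k,i)$, contributing $N^2$. Assembling the two parts with their Rician weights recovers exactly the claimed bound.

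The main obstacle is the identity $|\widetilde{\mathbf{\gamma}}_k^T\mathbf{h}^{\mathrm{BS}}_{\mathrm{L},i}|^2=M|\widetilde{\mathbf{\gamma}}_k^H\widetilde{\mathbf{\gamma}}_i|^2$: it requires careful tracking of transpose-versus-Hermitian conventions in the ULA array responses and exploits the fact that, under matched AoA estimation, $\widetilde{\mathbf{\gamma}}_k$ is, up to the $1/\sqrt{M}$ normalization, the complex conjugate of $\mathbf{h}^{\mathrm{BS}}_{\mathrm{L},k}$. The Jensen-plus-trace-inequality chain is otherwise routine; the structural observation worth flagging is that $\mathbf{Q}_{\mathrm{RF}}$ drops out of the final bound because each user's beamformer is perfectly matched to its own LOS direction, so only the inter-beam correlations at the BS side, captured by $\|\mathbf{F}_{\mathrm{RF}}^H\mathbf{F}_{\mathrm{RF}}\|_{\mathrm{F}}^2$, remain in the LOS term.
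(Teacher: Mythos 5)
Your proposal is correct and follows essentially the same route as the paper's Appendix A: both reduce the ZF SINR to $E_s/[\sigma_{\mathrm{MS}}^2\,\mathrm{tr}((\mathbf{H}_{\mathrm{eq}}^T\mathbf{H}_{\mathrm{eq}}^{\ast})^{-1})]$ and then bound $1/\mathrm{tr}(\mathbf{A}^{-1})$ by $\mathrm{tr}(\mathbf{A})/N^2$ (the paper phrases this via convexity of $x^{-1}$ applied to the eigenvalues, which is the same AM--HM inequality you invoke). In fact your write-up is more complete than the paper's, since you make the Jensen step on $\log_2(1+\cdot)$ explicit and carry out the LOS/scattering decomposition of $\mathrm{E}[\|\mathbf{H}_{\mathrm{eq}}\|_{\mathrm{F}}^2]$ that the appendix compresses into the final sentence.
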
\vspace*{-1mm}
\begin{proof}
Please refer to Appendix A.
\end{proof}
\hrule\vspace*{-1mm}
\end{figure*}
\begin{figure*}[t]
\begin{coro}
In the large numbers of antennas regime, i.e., $M\rightarrow\infty $, such that $\mathbf{F}_{\mathrm{RF}}^{H}\mathbf{F}_{\mathrm{RF}}\overset{a.s.}{\rightarrow }$ $\mathbf{I}_{N},$ the asymptotic achievable rate per user of the hybrid system is bounded by
\begin{equation}
R_{\mathrm{HB}}^{\mathrm{upper}}\underset{M\rightarrow \infty }{\overset{a.s.}{\rightarrow }}\log _{2}\left\{ 1+\left[
\frac{MP}{N}\left( \dfrac{\upsilon  }{\upsilon +1}\right) +\dfrac{1}{\upsilon  +1}\right] \dfrac{E_{s}}{\sigma _{\mathrm{MS}}^{2}}\right\}.
\label{HSUB_LA}
\end{equation}\label{Coro_1}
\end{coro}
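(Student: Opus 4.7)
My plan is to first verify the stated asymptotic orthogonality $\mathbf{F}_{\mathrm{RF}}^{H}\mathbf{F}_{\mathrm{RF}}\overset{a.s.}{\to}\mathbf{I}_{N}$ from the ULA structure of the beam selection in Step~1, and then perform a direct algebraic substitution into the upper bound of Theorem~\ref{thm:Theo_1} to recover (\ref{HSUB_LA}).

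For the orthogonality step, I would exploit that each column $\widetilde{\bm{\gamma}}_{k}$ of $\mathbf{F}_{\mathrm{RF}}$ is, by construction, a unit-norm ULA steering vector drawn from the discrete angular grid. In particular, $\widetilde{\bm{\gamma}}_{k}^{H}\widetilde{\bm{\gamma}}_{k}=1$ holds for every $M$, so the diagonal entries of $\mathbf{F}_{\mathrm{RF}}^{H}\mathbf{F}_{\mathrm{RF}}$ equal $1$ identically. For $k\neq j$, substituting $d=\lambda/2$ gives
\begin{equation}
\widetilde{\bm{\gamma}}_{k}^{H}\widetilde{\bm{\gamma}}_{j}=\frac{1}{M}\sum_{m=0}^{M-1}e^{j\pi m(\cos\widetilde{\theta}_{j}-\cos\widetilde{\theta}_{k})},
\end{equation}
which is a finite geometric series whose modulus I would bound in closed form by $\bigl[M\,|\sin(\pi(\cos\widetilde{\theta}_{j}-\cos\widetilde{\theta}_{k})/2)|\bigr]^{-1}$. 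Because the users are spatially well separated and their physical AoAs are continuously distributed in $[0,\pi]$, the selected beam directions $\widetilde{\theta}_{k}$ and $\widetilde{\theta}_{j}$ differ almost surely, so the bound is $O(1/M)$ and every off-diagonal entry vanishes a.s.\ as $M\to\infty$. This gives $\mathbf{F}_{\mathrm{RF}}^{H}\mathbf{F}_{\mathrm{RF}}\overset{a.s.}{\to}\mathbf{I}_{N}$ and, in particular, $\|\mathbf{F}_{\mathrm{RF}}^{H}\mathbf{F}_{\mathrm{RF}}\|_{\mathrm{F}}^{2}\overset{a.s.}{\to}N$.

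Substituting this limit into (\ref{Theo_1}) collapses the bracketed quantity to $\frac{\upsilon}{\upsilon+1}MPN+\frac{1}{\upsilon+1}N^{2}$. Dividing by $N^{2}$ leaves $\bigl[\frac{MP}{N}\frac{\upsilon}{\upsilon+1}+\frac{1}{\upsilon+1}\bigr]\frac{E_{s}}{\sigma_{\mathrm{MS}}^{2}}$ inside the logarithm, which is exactly (\ref{HSUB_LA}). The substitution step is continuous in the Frobenius norm argument, so the $a.s.$\ convergence transfers to the rate bound.

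The only delicate point is the ``almost surely'' qualifier: it relies on the quantized beam-selection of Step~1 not mapping two distinct physical AoAs to the same grid point. With a fixed angular grid of $J$ directions and continuously distributed user angles, such collisions happen only on a set of probability zero, so they do not affect the a.s.\ limit; this is where I expect the sharpest reader scrutiny but the argument is otherwise routine.
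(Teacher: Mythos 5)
Your proposal is correct and its core — substituting $\mathbf{F}_{\mathrm{RF}}^{H}\mathbf{F}_{\mathrm{RF}}\overset{a.s.}{\to}\mathbf{I}_{N}$, hence $\|\mathbf{F}_{\mathrm{RF}}^{H}\mathbf{F}_{\mathrm{RF}}\|_{\mathrm{F}}^{2}\to N$, into (\ref{Theo_1}) and simplifying — is exactly the paper's one-line proof. The additional geometric-series argument for the asymptotic orthogonality is extra credit: the paper treats that convergence as a stated hypothesis of the corollary (``such that \ldots''), so while your justification is a reasonable sanity check, it is not required for the result as stated.
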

\begin{proof}
The result follows by substituting $\mathbf{F}_{\mathrm{RF}}^{H}\mathbf{F}_{\mathrm{RF}}\underset{M\rightarrow \infty }{\overset{a.s.}{\rightarrow }}\mathbf{I}_{N}$ into (\ref{Theo_1}).
\end{proof}
\hrule\vspace*{-1mm}
\end{figure*}

After $\left[\begin{array}{ccc}\mathbf{s}_{1}, \cdots, \mathbf{s}_{N}\end{array}\right]$ is obtained, we then adopt the least square (LS) method for our equivalent channel estimation.
We note here, the LS method is widely used in practice since it does not require any prior channel information.
Subsequently, with the help of orthogonal pilot sequences, we can construct an equivalent hybrid uplink channel matrix $\mathbf{H}_{\mathrm{eq}}\in\mathbb{C}^{N\times N}$ formed by the proposed scheme via the LS estimation method.
Then, due to the channel reciprocity, the equivalent downlink channel of the hybrid system $\mathbf{H}_{\mathrm{eq}}^{T}$ can be expressed as:\vspace*{-0mm}
\setcounter{equation}{15}
\begin{align}
\widehat{\mathbf{H}}_{\mathrm{eq}}^{T}&=\mathbf{\Psi }^{H}\left[\begin{array}{ccccc}\mathbf{s}_{1} & \ldots & \mathbf{s}_{k} & \ldots & \mathbf{s}_{N}\end{array}\right]\label{EHC_1}  \\
&=\underset{\mathbf{H}_{\mathrm{eq}}^{T}}{\underbrace{\left[
\begin{array}{c}
\widetilde{\mathbf{\omega }}_{1}^{H}\mathbf{H}_{1}^{T}\mathbf{F}_{\mathrm{RF}%
} \\
\vdots \\
\widetilde{\mathbf{\omega }}_{N}^{H}\mathbf{H}_{N}^{T}\mathbf{F}_{\mathrm{RF}}
\end{array}%
\right]}} +\underset{\mathrm{effictive}\text{ }\mathrm{noise}}{\underbrace{%
\frac{1}{\sqrt{E_{\mathrm{P}}}}\left[
\begin{array}{c}
\mathbf{\Phi}_{1}^{H}\mathbf{Z}^{T}\mathbf{F}_{\mathrm{RF}} \\
\vdots \\
\mathbf{\Phi} _{N}^{H}\mathbf{Z}^{T}\mathbf{F}_{\mathrm{RF}}%
\end{array}%
\right] }}.\notag
\end{align}\vspace*{-0mm}%
From Equation (\ref{EHC_1}), we  observe that the proposed hybrid channel estimation can obtain all users' equivalent CSI simultaneously.
Compared to existing channel estimation methods, e.g. compressed-sensing algorithm \cite{Alkhateeb2015}, the proposed algorithm does not require explicit CSI feedback from each antenna equipped at the desired users.
\vspace*{+0.0mm}
\section{Hybrid ZF Precoding and Performance Analysis}
In this section, we illustrate and analyze the achievable rate performance per user of the considered hybrid mmWave system under ZF downlink transmission.
The ZF downlink precoding is based on the estimated hybrid equivalent channel $\mathbf{H}_{\mathrm{eq}}$, which subsumes the BS beamforming matrix $\mathbf{F}_{\mathrm{RF}}$ and the users beamforming matrix $\mathbf{Q}_{\mathrm{RF}}$.
We derive a closed-form upper bound of achievable rate per user of ZF precoding in hybrid mmWave systems.
Also, we compare the system achievable rate upper bound obtained by the fully digital system exploiting ZF precoding for a large number of antennas.

\vspace*{-0.0mm}

\subsection{ZF Precoding}

Now, we utilize the estimated equivalent channel for downlink ZF precoding.
To study the best achievable rate performance of the proposed scheme, we first assume that the equivalent channel is estimated in the high signal-to-noise ratio (SNR) regime, e.g. $E_{\mathrm{P}}\rightarrow \infty$.
Then, the baseband digital ZF precoder $\overline{\mathbf{W}}_{\mathrm{eq}}\in\mathbb{C}^{N\times N}$ based on $\mathbf{H}_{\mathrm{eq}}$ is given by \vspace*{-0.0mm}
\begin{equation}
\overline{\mathbf{W}}_{\mathrm{eq}}=\mathbf{H}_{\mathrm{eq}}^{\ast }(\mathbf{H}_{\mathrm{eq}}^{T}\mathbf{H}_{\mathrm{eq}}^{\ast})^{-1}=\left[\begin{array}{ccc}\overline{\mathbf{w}}_{\mathrm{eq,}1},\ldots ,\overline{\mathbf{w}}_{\mathrm{eq,}N}
\end{array}%
\right] ,  \label{P1}
\end{equation}\vspace*{-0.0mm}where $\overline{\mathbf{w}}_{\mathrm{eq,}k}\in\mathbb{C}^{N\times 1}$ is the $k$-th column of ZF precoder for user $k$.
As each user is equipped with only one RF chain, one superimposed signal is received at each user at each time instant with hybrid transceivers.
The received signal at user $k$ after receive beamforming can be expressed as: \vspace*{-0.0mm}
\begin{align}
y_{\mathrm{ZF}}^{k}&=\underset{\mathrm{desired}\text{ }\mathrm{signal}}{%
\underbrace{\widetilde{\mathbf{\omega }}_{k}^{H}\mathbf{H}_{k}^{T}\mathbf{F}%
_{\mathrm{RF}}\overline{\beta }\overline{\mathbf{w}}_{\mathrm{eq,}k}x_{k}}}\notag \\
&+\underset{\mathrm{interference}}{\underbrace{\widetilde{\mathbf{\omega }}_{k}^{H}\mathbf{H}_{k}^{T}\overset{N}{\underset{j=1,j\neq k}{\sum }}\mathbf{F}%
_{\mathrm{RF}}\overline{\beta }\overline{\mathbf{w}}_{\mathrm{eq,}j}x_{j}}}+%
\underset{\mathrm{noise}}{\underbrace{\widetilde{\mathbf{\omega }}_{k}^{H}%
\mathbf{z}_{\mathrm{MS},k}}},  \label{P2}
\end{align}\vspace*{-0.0mm}%
where $x_{k}\in\mathbb{C}^{1\times 1}$ is the transmitted symbol from the BS to user $k$ in the desired cell, $\mathrm{E}\left[ \left\vert x_{k}^{2}\right\vert \right]=E_{s}$, $E_{s}$ is the average transmitted symbol energy for each user, $\overline{\beta }=\sqrt{\tfrac{1}{\mathrm{tr}(\overline{\mathbf{W}}%
_{\mathrm{eq}}\overline{\mathbf{W}}_{\mathrm{eq}}^{H})}}$ is the
transmission power normalization factor, and the effective
noise part $\mathbf{z}_{\mathrm{MS,}%
k}\sim \mathcal{CN}\left( \mathbf{0},{\sigma_{\mathrm{MS}}^{2}}\mathbf{I}\right) $.
Then we express the signal-to-interference-plus-noise ratio (SINR) of user $k$ as \vspace{-0mm}
\begin{equation}
\mathrm{SINR}_{\mathrm{ZF}}^{k}=\frac{\overline{\beta }^{2}E_{s}}{\sigma _{%
\mathrm{MS}}^{2}}.  \label{Eq_1520}
\end{equation}\vspace{-0mm}%
In the sequel, we study the performance of the considered hybrid mmWave system. For simplicity, we assume the mmWave channels of all the users have the same Rician K-factor, i.e., $\upsilon _{k} = \upsilon, \forall k$.
\vspace*{+0.0mm}
\subsection{Performance Upper Bound of ZF Precoding}

Now, exploiting the SINR expression in (\ref{Eq_1520}),  we summarize the upper bound of achievable rate per user of the ZF precoding in a theorem at the top of this page.\vspace*{-0mm}
From Equation (\ref{Theo_1}), we see that the upper bound of achievable rate per user of the proposed hybrid ZF precoding depends on the Rician K-factor, $ \upsilon$.
We can further observe that the upper bound of the achievable rate per user also depends on the BS beamforming matrix $\mathbf{F}_{\mathrm{RF}}$ designed in step $2$ of the proposed CSI estimation.
With an increasing number of antennas at the BS, communication channels are more likely to be orthogonal.
Therefore, it is interesting to evaluate the asymptotic upper bound $R_{\mathrm{HB}}^{\mathrm{upper}}$ for the case of a large number of antennas. We note that, even if the number of antennas equipped at the BS is sufficiently large, the required number of RF chains is still equal to the number of users in the hybrid mmWave systems and the result is summarized in Corollary \ref{Coro_1} at the top of this page.
In Equation (\ref{HSUB_LA}), we have the intuitive observation that the performance of the  proposed hybrid precoding is mainly determined by the equipped numbers of antennas and RF chains.

\subsection{Performance of Fully Digital System}
\begin{figure}[t]
\centering
\includegraphics[width =
3.5in]{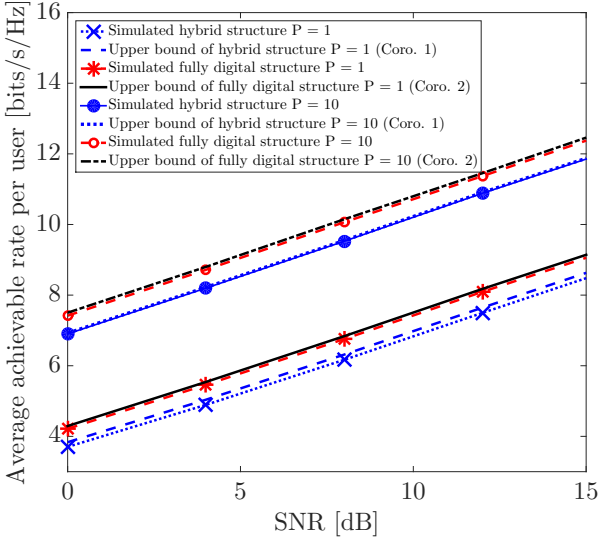}
\caption{The average achievable rate per user [bits/s/Hz] versus SNR for the hybrid system and the fully digital system.}
\label{fig:HBvsFD}
\end{figure}
In this section, we derive the achievable rate performance of a fully digital mmWave system in the large numbers of antennas regime. The obtained analytical results in this section will be used to  compare with the considered hybrid system in the simulation section.
To this end, for the fully digital mmWave system, we assume that each user is equipped with one RF chain and $P$ antennas.
The $P$ antenna array equipped at each user can provide $10\log_{10}(P)$ dB array gain.
We note that, the number of antennas equipped at the BS is $M$ and the number of RF chains equipped at the BS is equal to the number of antennas.
The channel matrix for user $k$ is given by\vspace*{-0mm}
\setcounter{equation}{21}
\begin{equation}
\mathbf{H}_{k}^{T}=\mathbf{h}_{k}^{\ast }\mathbf{h}_{\mathrm{BS,}k}^{T}.
\end{equation}\vspace*{-0mm}%
We assume that the CSI is perfectly known to the users and the BS is with the fully digital system to illustrate the maximal performance gap between the proposed structure and the perfect case.
Therefore, the achievable rate per user upper bound of the fully digital system is summarized in the following Corollary 2.
\vspace*{-0mm}
\begin{coro}
In the large numbers of antennas regime, the asymptotic achievable rate per user of the fully digital system is bounded above by
\begin{equation}
R_{\mathrm{FD}}\leqslant R_{\mathrm{FD}}^{\mathrm{upper}}\underset{%
M\rightarrow \infty }{\overset{a.s.}{\rightarrow }}\log _{2}\left[ 1+\frac{MP%
}{N}\dfrac{E_{s}}{\sigma _{\mathrm{MS}}^{2}}\right] .  \label{FDUB_LA}
\end{equation}
\end{coro}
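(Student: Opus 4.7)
The plan is to mirror the proof of Theorem \ref{thm:Theo_1} but under the simpler assumptions of perfect CSI and a fully digital BS in which every antenna has its own RF chain, so that no analog precoding restriction applies. Because the channel $\mathbf{H}_{k}^{T} = \mathbf{h}_{k}^{\ast}\mathbf{h}_{\mathrm{BS},k}^{T}$ is rank one and retains the clean steering-vector structure at both ends, the multi-user problem reduces to a standard asymptotic analysis of ZF precoding over BS steering vectors and no scattering component needs to be tracked separately.

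First I would perform maximum-ratio combining at user $k$ with the normalized conjugate steering vector $\mathbf{u}_{k} = \mathbf{h}_{k}^{\ast}/\sqrt{P}$. Since every entry of $\mathbf{h}_{k}$ has unit modulus on a ULA with $d=\lambda/2$, one has $\|\mathbf{h}_{k}\|^{2} = P$, which captures the $P$-antenna array gain and collapses the receive side to a scalar. After this step the effective $N\times M$ downlink channel has rows $\sqrt{P}\,\mathbf{h}_{\mathrm{BS},k}^{T}$, and I would construct the ZF precoder $\mathbf{F}_{\mathrm{ZF}} = \mathbf{H}_{\mathrm{eff}}^{H}(\mathbf{H}_{\mathrm{eff}}\mathbf{H}_{\mathrm{eff}}^{H})^{-1}$ with the equal-power normalization $\overline{\beta}^{2}=1/\mathrm{tr}((\mathbf{H}_{\mathrm{eff}}\mathbf{H}_{\mathrm{eff}}^{H})^{-1})$. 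By the same interference-cancellation argument that leads to (\ref{Eq_1520}), the per-user $\mathrm{SINR}$ becomes $\overline{\beta}^{2}E_{s}/\sigma_{\mathrm{MS}}^{2}$.

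The asymptotic step is then to evaluate $\mathrm{tr}((\mathbf{H}_{\mathrm{eff}}\mathbf{H}_{\mathrm{eff}}^{H})^{-1}) = (1/P)\,\mathrm{tr}((\mathbf{H}_{\mathrm{BS}}\mathbf{H}_{\mathrm{BS}}^{H})^{-1})$ in the limit $M\rightarrow\infty$, where $\mathbf{H}_{\mathrm{BS}}$ collects the $N$ BS steering vectors as rows. Each diagonal entry of $\mathbf{H}_{\mathrm{BS}}\mathbf{H}_{\mathrm{BS}}^{H}$ equals $M$ exactly because the steering vectors are unit modulus, while the off-diagonal $(k,j)$ entry is a Dirichlet-type geometric sum whose modulus is bounded independently of $M$ whenever $\cos\theta_{k}\neq\cos\theta_{j}$. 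Therefore $(1/M)\mathbf{H}_{\mathrm{BS}}\mathbf{H}_{\mathrm{BS}}^{H}\overset{a.s.}{\rightarrow}\mathbf{I}_{N}$, from which $\mathrm{tr}((\mathbf{H}_{\mathrm{eff}}\mathbf{H}_{\mathrm{eff}}^{H})^{-1})\rightarrow N/(MP)$ and $\overline{\beta}^{2}\rightarrow MP/N$, giving the claimed limit inside the logarithm.

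The main obstacle is the upper-bound portion of the statement, since before taking the limit one must argue that the rate cannot exceed the right-hand side for any finite $M$. I would handle this exactly as in Theorem \ref{thm:Theo_1}: apply Jensen's inequality to pull the expectation inside $\log_{2}(\cdot)$, then invoke the AM--HM trace inequality $\mathrm{tr}(\mathbf{A}^{-1}) \geq N^{2}/\mathrm{tr}(\mathbf{A})$ together with the exact identity $\mathrm{tr}(\mathbf{H}_{\mathrm{eff}}\mathbf{H}_{\mathrm{eff}}^{H})=NMP$ to obtain the deterministic ceiling $\overline{\beta}^{2}\leq MP/N$. Asymptotic orthogonality of the BS steering vectors saturates this bound, so the upper bound and the limit coincide. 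Comparing the resulting expression with Corollary \ref{Coro_1} then cleanly isolates the performance gap between the hybrid and fully digital systems in the Rician K-factor terms $\upsilon/(\upsilon+1)$ and $1/(\upsilon+1)$, delivering the system-design insight highlighted in the introduction.
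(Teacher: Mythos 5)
Your proposal is correct and follows essentially the same route as the paper, whose own proof is just the one-line remark that Corollary 2 ``follows similar procedures as Appendix A'': you reuse the SINR form $\overline{\beta}^{2}E_{s}/\sigma_{\mathrm{MS}}^{2}$, the AM--HM trace inequality $\mathrm{tr}(\mathbf{A}^{-1})\geq N^{2}/\mathrm{tr}(\mathbf{A})$ from (\ref{JIE})--(\ref{Proof_3}), and the exact identity $\mathrm{tr}(\mathbf{H}_{\mathrm{eff}}\mathbf{H}_{\mathrm{eff}}^{H})=NMP$ for the rank-one LOS channel of (22), then saturate the bound via asymptotic orthogonality of the ULA steering vectors as $M\rightarrow\infty$. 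Your write-up actually supplies the details the paper omits (the $P$-fold receive array gain via matched combining and the Dirichlet-kernel argument for vanishing normalized cross-correlations), and they are all consistent with the paper's framework.
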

\begin{proof}
The result follows similar procedures the proof as in Appendix A.
\end{proof}

In the large numbers of antennas regime, based on (\ref{HSUB_LA}) and (\ref{FDUB_LA}), it is interesting to observe that with an increasing Rician K-factor $\upsilon $, the performance upper bounds of the two considered structures will coincide.\vspace{-0mm}%

\vspace*{+0.0mm}
\section{Simulation and Discussion}

In this section, we present numerical results to validate our analysis.
We consider a single cell hybrid mmWave system.
\begin{figure}[t]
\centering
\includegraphics[width =3.5in]{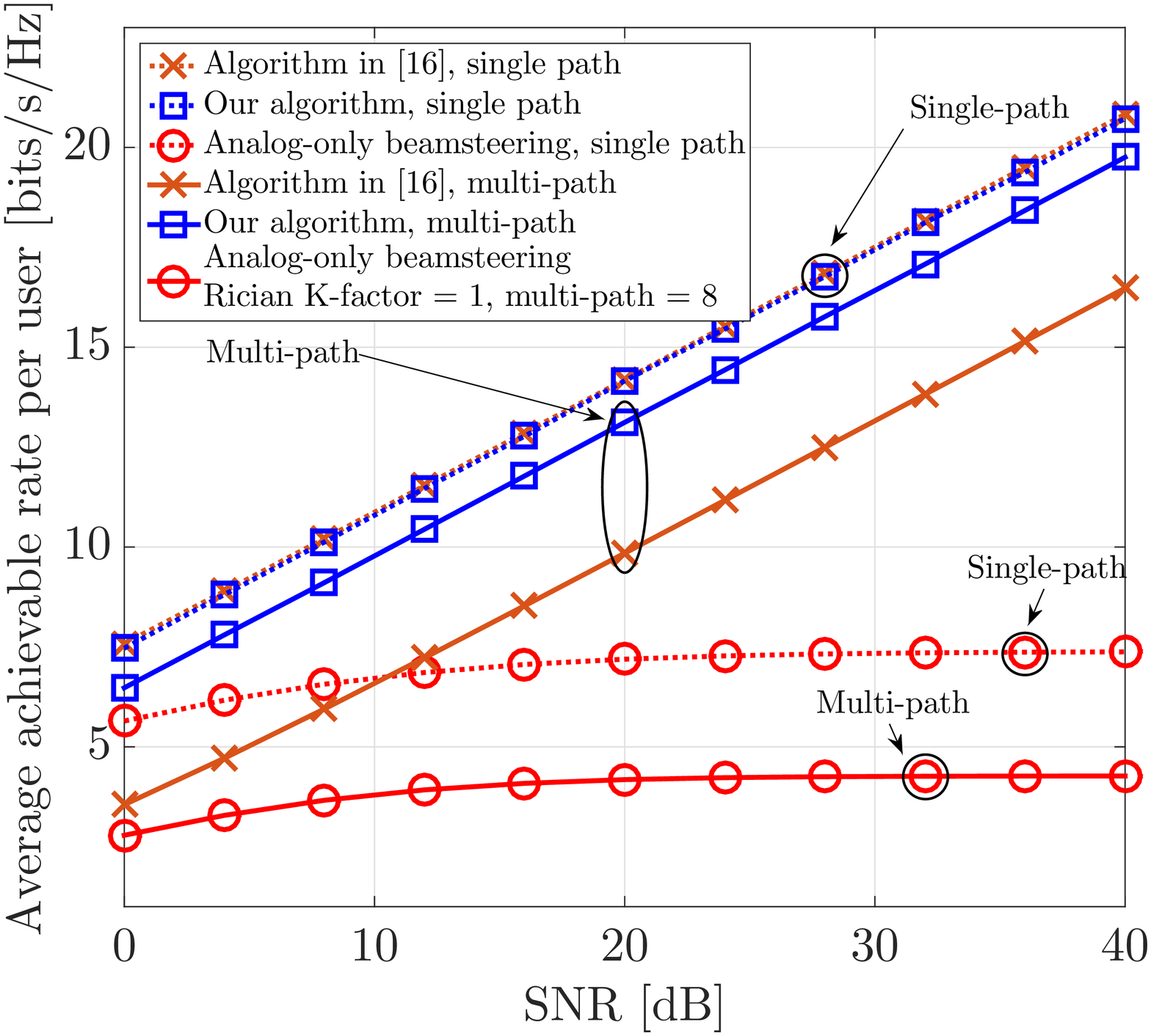}
\caption{The average achievable rate per user (bits/s/Hz) versus SNR for \cite{Alkhateeb2015}
proposed algorithm and our proposed algorithm.}
\label{fig:Comp_HB_ALH_ours}
\end{figure}
In Figure \ref{fig:HBvsFD}, we present a comparison between the achievable rate per user of the hybrid system and the fully digital system for $M=100,$ $N=10$, and a Rician K-factor of $\upsilon_{k} =2,\forall k$.
First, our simulation results verify the tightness of derived upper bounds in (\ref{HSUB_LA}) and (\ref{FDUB_LA}).
It can be observed from Figure \ref{fig:HBvsFD} that, even for a small value of Rician K-factor, our proposed channel estimation scheme with ZF precoding can achieve considerable high sum rate performance due to its interference suppression capability.
In addition, the performance gap between the fully digital system and the hybrid system is small, which is determined by the strongest AoA component to the scattering component ratio.

In Figure \ref{fig:Comp_HB_ALH_ours}, we illustrate the effectiveness of the proposed non-sparse mmWave channel estimation algorithm.
We assume perfect channel estimation with $M=100,$ $N=4,$ and $P=16$. For non-sparse mmWave channels, we assume $\upsilon_{k} =1,\forall k$.
In Figure \ref{fig:Comp_HB_ALH_ours}, we compare between the achievable rates using the proposed hybrid algorithm and the algorithm proposed by \cite{Alkhateeb2015} for sparse and non-sparse mmWave channels.
For sparse single-path channels, the achievable rate of the proposed algorithm matches with the algorithm proposed in \cite{Alkhateeb2015}.
For non-sparse mmWave channels, with the number of multi-paths $N_{\mathrm{l}}=8$, we observe that the proposed algorithm achieves a better system performance than that of the algorithm proposed in \cite{Alkhateeb2015}.
The reason is that, the proposed algorithm takes the scattering components into account and exploits the strongest AoAs of all the users to suppress the MU interference.
In contrast, the algorithm proposed in \cite{Alkhateeb2015}, which aims to maximize the desired signal energy, does not suppress the MU interference as effective as our proposed algorithm.
Furthermore, Figure \ref{fig:Comp_HB_ALH_ours} also illustrates that a significant achievable rate gain is brought by the proposed channel estimation and ZF precoding over a simple analog-only beamforming steering scheme.

\vspace*{+0.0mm}
\section{Conclusions}

In this paper, we proposed a low-complexity mmWave channel estimation for the MU hybrid mmWave systems, which is applicable for both sparse and non-sparse mmWave channel environments.
The achievable rate performance of ZF precoding based on the proposed channel estimation scheme was derived and compared with the achievable rate of fully digital systems.
The analytical and simulation results indicated that the proposed scheme can approach the rate performance achieved by the fully digital system with sufficient large Rician K-factors.

\vspace*{+0.0mm}
\section*{Appendix}
\vspace{-0mm}
\subsection{Proof of Theorem 1}

The average achievable rate per user of ZF precoding is given by \vspace*{-0mm}%
\begin{equation}
R_{\mathrm{HB}}=\mathrm{E}_{%
\mathrm{H}_{\mathrm{S}}}\left\{ \log _{2}\left[ 1+\left[ \mathrm{tr}\left[ (%
\mathbf{H}_{\mathrm{eq}}^{T}\mathbf{H}_{\mathrm{eq}}^{\ast })^{-1}\right] %
\right] ^{-1}\dfrac{E_{s}}{\sigma _{\mathrm{MS}}^{2}}\right] \right\} .
\label{Proof_1}
\end{equation}\vspace*{-0mm}%
First, we introduce some preliminaries. Since $\mathbf{H}_{\mathrm{eq}}^{T}\mathbf{H}_{\mathrm{eq}}^{\ast}$ is a positive definite Hermitian matrix, by eigenvalue decomposition, it can be decomposed as $\mathbf{H}_{\mathrm{eq}}^{T}\mathbf{H}_{\mathrm{eq}}^{\ast }=\mathbf{U\Lambda V}^{H}$, $\mathbf{\Lambda}\in\mathbb{C}^{N\times N}$ is the positive diagonal eigenvalue matrix, while $\mathbf{V}\in\mathbb{C}^{N\times N}$ and $\mathbf{U}\in\mathbb{C}^{N\times N}$ are unitary matrixes, $\mathbf{U=V}^{H}$.
The sum of the eigenvalues of $\mathbf{H}_{\mathrm{eq}}^{T}\mathbf{H}_{\mathrm{eq}}^{\ast}$ equals to the trace of matrix $\mathbf{\Lambda }$.
Then we can rewrite the power normalization factor in (\ref{Proof_1}) as\vspace*{-2mm}%
\begin{equation}
\frac{N}{\mathrm{tr}\left[ (\mathbf{H}_{\mathrm{eq}}^{T}\mathbf{H}_{\mathrm{eq}}^{\ast })^{-1}\right] }=\left[ \overset{N}{\underset{i=1}{\dsum }}\frac{1}{N}\lambda _{i}^{-1}\right]^{-1},\label{Proof_2}
\end{equation}\vspace{-0mm}%
In addition, $f(x)=x^{-1}$ is a strictly decreasing convex function and exploiting the convexity, we have the following results \cite{book:infotheory}:\vspace{-0mm}%
\begin{equation}
\left[ \overset{N}{\underset{i=1}{\dsum }}\frac{1}{N}\lambda _{i}^{-1}\right]
^{-1}\leqslant \overset{N}{\underset{i=1}{\dsum }}\frac{1}{N}\left[ \left(
\lambda _{i}^{-1}\right) ^{-1}\right] =\overset{N}{\underset{i=1}{\dsum }}%
\frac{1}{N}\lambda _{i}.  \label{JIE}
\end{equation}\vspace{-0mm}%
Therefore, based on (\ref{Proof_2}) and (\ref{JIE}), we have the following inequality: \vspace{-0mm}%
\begin{equation}
\frac{1}{\mathrm{tr}\left[ \left( \mathbf{H}_{\mathrm{eq}}^{T}\mathbf{H}_{%
\mathrm{eq}}^{\ast }\right) ^{-1}\right] }\leqslant \overset{N}{\underset{i=1%
}{\dsum }}\frac{1}{N^{2}}\lambda _{i}=\frac{1}{N^{2}}\mathrm{tr}\left[ \mathbf{H}_{%
\mathrm{eq}}^{T}\mathbf{H}_{\mathrm{eq}}^{\ast }\right] .  \label{Proof_3}
\end{equation}\vspace{-0mm}%
From (\ref{Proof_3}), Equation (\ref{Proof_1}) can be rewritten as (\ref{Theo_1}) in Theorem 1.\vspace{-0mm}%


\bibliographystyle{IEEEtran}

\end{document}